\pgfplotsset{compat=1.14}
\definecolor{myParula01Blue}{RGB}{0,114,189}
\definecolor{myParula02Orange}{RGB}{217,83,25}
\definecolor{myParula03Yellow}{RGB}{237,177,32}
\definecolor{myParula04Purple}{RGB}{126,47,142}
\definecolor{myParula05Green}{RGB}{119,172,48}
\definecolor{myParula06LightBlue}{RGB}{77,190,238}
\definecolor{myParula07Red}{RGB}{162,20,47}
\tikzset{myparula11/.style={color=myParula01Blue,solid,mark=+,mark options={solid}}}
\tikzset{myparula12/.style={color=myParula01Blue,densely dashed,mark=x,mark options={solid}}}
\tikzset{myparula13/.style={color=myParula01Blue,densely dotted,mark=o,mark options={solid}}}
\tikzset{myparula14/.style={color=myParula01Blue,dashdotted,mark=triangle,mark options={solid}}}
\tikzset{myparula15/.style={color=myParula01Blue,dashdotdotted,mark=square,mark options={solid}}}
\tikzset{myparula21/.style={color=myParula02Orange,solid,mark=+,mark options={solid}}}
\tikzset{myparula22/.style={color=myParula02Orange,densely dashed,mark=x,mark options={solid}}}
\tikzset{myparula23/.style={color=myParula02Orange,densely dotted,mark=o,mark options={solid}}}
\tikzset{myparula24/.style={color=myParula02Orange,dashdotted,mark=triangle,mark options={solid}}}
\tikzset{myparula25/.style={color=myParula02Orange,dashdotdotted,mark=square,mark options={solid}}}
\tikzset{myparula31/.style={color=myParula03Yellow,solid,mark=+,mark options={solid}}}
\tikzset{myparula32/.style={color=myParula03Yellow,densely dashed,mark=x,mark options={solid}}}
\tikzset{myparula33/.style={color=myParula03Yellow,densely dotted,mark=o,mark options={solid}}}
\tikzset{myparula34/.style={color=myParula03Yellow,dashdotted,mark=triangle,mark options={solid}}}
\tikzset{myparula35/.style={color=myParula03Yellow,dashdotdotted,mark=square,mark options={solid}}}
\tikzset{myparula41/.style={color=myParula04Purple,solid,mark=+,mark options={solid}}}
\tikzset{myparula42/.style={color=myParula04Purple,densely dashed,mark=x,mark options={solid}}}
\tikzset{myparula43/.style={color=myParula04Purple,densely dotted,mark=o,mark options={solid}}}
\tikzset{myparula44/.style={color=myParula04Purple,dashdotted,mark=triangle,mark options={solid}}}
\tikzset{myparula45/.style={color=myParula04Purple,dashdotdotted,mark=square,mark options={solid}}}
\tikzset{myparula51/.style={color=myParula05Green,solid,mark=+,mark options={solid}}}
\tikzset{myparula52/.style={color=myParula05Green,densely dashed,mark=x,mark options={solid}}}
\tikzset{myparula53/.style={color=myParula05Green,densely dotted,mark=o,mark options={solid}}}
\tikzset{myparula54/.style={color=myParula05Green,dashdotted,mark=triangle,mark options={solid}}}
\tikzset{myparula55/.style={color=myParula05Green,dashdotdotted,mark=square,mark options={solid}}}
\tikzset{myparula61/.style={color=myParula06LightBlue,solid,mark=+,mark options={solid}}}
\tikzset{myparula62/.style={color=myParula06LightBlue,densely dashed,mark=x,mark options={solid}}}
\tikzset{myparula63/.style={color=myParula06LightBlue,densely dotted,mark=o,mark options={solid}}}
\tikzset{myparula64/.style={color=myParula06LightBlue,dashdotted,mark=triangle,mark options={solid}}}
\tikzset{myparula65/.style={color=myParula06LightBlue,dashdotdotted,mark=square,mark options={solid}}}
\tikzset{myparula71/.style={color=myParula07Red,solid,mark=+,mark options={solid}}}
\tikzset{myparula72/.style={color=myParula07Red,densely dashed,mark=x,mark options={solid}}}
\tikzset{myparula73/.style={color=myParula07Red,densely dotted,mark=o,mark options={solid}}}
\tikzset{myparula74/.style={color=myParula07Red,dashdotted,mark=triangle,mark options={solid}}}
\tikzset{myparula75/.style={color=myParula07Red,dashdotdotted,mark=square,mark options={solid}}}
\begin{document}
	\newcommand{\ensemble}{\mathscr{C}}
\newcommand{\code}{\mathcal{C}}
\newcommand{\vecu}{\boldsymbol{u}}
\newcommand{\veci}{\boldsymbol{i}}
\newcommand{\vecf}{\boldsymbol{f}}
\newcommand{\vecv}{\boldsymbol{v}}
\newcommand{\vecp}{\boldsymbol{p}}
\newcommand{\vecx}{\boldsymbol{x}}
\newcommand{\vecone}{\boldsymbol{1}}
\newcommand{\vecuhat}{\hat{\boldsymbol{u}}}
\newcommand{\vecc}{\boldsymbol{c}}
\newcommand{\vecb}{\boldsymbol{b}}
\newcommand{\vecchat}{\hat{\boldsymbol{c}}}
\newcommand{\vecy}{\boldsymbol{y}}
\newcommand{\vecz}{\boldsymbol{z}}
\newcommand{\F}{\boldsymbol{F}}
\newcommand{\B}{\boldsymbol{B}}
\newcommand{\G}{\boldsymbol{G}}
\newcommand{\GK}{\boldsymbol{K}}
\newcommand{\Gsys}{\G_{\mathsf{i},\mathsf{sys}}}
\newcommand{\Gnsys}{\G_{\mathsf{i},\mathsf{nsys}}}
\newcommand{\Per}{\boldsymbol{\Pi}}
\newcommand{\I}{\boldsymbol{I}}
\newcommand{\perP}{\boldsymbol{P}}
\newcommand{\perS}{\boldsymbol{S}}
\newcommand{\GH}[1]{\bm{\mathsf{G}}_{#1}}
\newcommand{\T}[1]{\bm{\mathsf{T}}{(#1)}}

\newcommand{\mi}{\mathrm{I}}
\newcommand{\Prob}{P}
\newcommand{\Z}{\mathrm{Z}}
\newcommand{\SPC}{\mathcal{S}}
\newcommand{\Rep}{\mathcal{R}}
\newcommand{\f}{\mathrm{f}}
\newcommand{\W}{\mathrm{W}}
\newcommand{\A}{\mathrm{A}}
\newcommand{\SC}{\mathrm{SC}}
\newcommand{\Q}{\mathrm{Q}}
\newcommand{\MAP}{\mathrm{MAP}}
\newcommand{\E}{\mathrm{E}}
\newcommand{\U}{\mathrm{U}}
\newcommand{\Ham}{\mathrm{H}}

\newcommand{\Amin}{\mathrm{A}_{\mathrm{min}}}
\newcommand{\dmin}{d}
\newcommand{\erfc}{\mathrm{erfc}}

\newcommand{\de}{\mathrm{d}}

\newcommand{\decoRule}{\rule{\textwidth}{.4pt}}

\newcommand{\oleq}[1]{\overset{\text{(#1)}}{\leq}}
\newcommand{\oeq}[1]{\overset{\text{(#1)}}{=}}
\newcommand{\ogeq}[1]{\overset{\text{(#1)}}{\geq}}
\newcommand{\ogeql}[2]{\overset{#1}{\underset{#2}{\gtreqless}}}
\newcommand{\argmax}[1]{\underset{#1}{\mathrm{arg \, max}}\,}
\newcommand\numeq[1]%
{\stackrel{\scriptscriptstyle(\mkern-1.5mu#1\mkern-1.5mu)}{=}}

\newcommand{\pscd}{\gl{\Prob(\mathcal{E}_{\SCD})}}
\newcommand{\uED}{\gl{\hat{u}}}
\newcommand{\LED}{\gl{L_i^{(\ED)}}}
\newcommand{\inverson}[1]{\gl{\mathbb{I}\left\{#1\right\}}}

\newtheorem{mydef}{Definition}
\newtheorem{prop}{Proposition}
\newtheorem{theorem}{Theorem}

\newtheorem{lemma}{Lemma}
\newtheorem{remark}{Remark}
\newtheorem{example}{Example}
\newtheorem{definition}{Definition}
\newtheorem{corollary}{Corollary}


\definecolor{lightblue}{rgb}{0,.5,1}
\definecolor{normemph}{rgb}{0,.2,0.6}
\definecolor{supremph}{rgb}{0.6,.2,0.1}
\definecolor{lightpurple}{rgb}{.6,.4,1}
\definecolor{gold}{rgb}{.6,.5,0}
\definecolor{orange}{rgb}{1,0.4,0}
\definecolor{hotpink}{rgb}{1,0,0.5}
\definecolor{newcolor2}{rgb}{.5,.3,.5}
\definecolor{newcolor}{rgb}{0,.3,1}
\definecolor{newcolor3}{rgb}{1,0,.35}
\definecolor{darkgreen1}{rgb}{0, .35, 0}
\definecolor{darkgreen}{rgb}{0, .6, 0}
\definecolor{darkred}{rgb}{.75,0,0}
\definecolor{midgray}{rgb}{.8,0.8,0.8}
\definecolor{darkblue}{rgb}{0,.25,0.6}

\definecolor{lightred}{rgb}{1,0.9,0.9}
\definecolor{lightblue}{rgb}{0.9,0,0.0}
\definecolor{lightpurple}{rgb}{.6,.4,1}
\definecolor{gold}{rgb}{.6,.5,0}
\definecolor{orange}{rgb}{1,0.4,0}
\definecolor{hotpink}{rgb}{1,0,0.5}
\definecolor{darkgreen}{rgb}{0, .6, 0}
\definecolor{darkred}{rgb}{.75,0,0}
\definecolor{darkblue}{rgb}{0,0,0.6}

\definecolor{bgblue}{RGB}{245,243,253}
\definecolor{ttblue}{RGB}{91,194,224}

\definecolor{dark_red}{RGB}{150,0,0}
\definecolor{dark_green}{RGB}{0,150,0}
\definecolor{dark_blue}{RGB}{0,0,150}
\definecolor{dark_pink}{RGB}{80,120,90}

\begin{acronym}
	\acro{APP}{a-posteriori probability}
	\acro{AWGN}{additive white Gaussian noise}
	\acro{B-AWGN}{binary input additive white Gaussian noise}
	\acro{B-DMC}{binary-input discrete memoryless channel}
	\acro{B-DMSC}{binary-input discrete memoryless symmetric channel}
	\acro{B-MSC}{binary-input memoryless symmetric channel}
	\acro{BCJR}{Bahl, Cocke, Jelinek, and Raviv}
	\acro{BEC}{binary erasure channel}
	\acro{BER}{bit error rate}
	\acro{BLEP}{block error probability}
	\acro{BLER}{block error rate}
	\acro{BP}{belief propagation}
	\acro{BSC}{binary symmetric channel}
	\acro{CER}{codeword error rate}
	\acro{CN}{check node}
	\acro{CRC}{cyclic redundancy check}
	\acro{DE}{density evolution}
	\acro{eBCH}{extended Bose-Chaudhuri-Hocquengham}
	\acro{FER}{frame error rate}
	\acro{GA}{Gaussian approximation}
	\acro{i.i.d.}{independent and identically distributed}
	\acro{IO-WE}{input-output weight enumerator}
	\acro{IR-WE}{input-redundancy weight enumerator}
	\acro{IO-WEF}{input-output weight enumerating function}
	\acro{IR-WEF}{input-redundancy weight enumerating function}
	\acro{JIO-WE}{joint IO-WE}
	\acro{JIR-WE}{joint IR-WE}
	\acro{JWE}{joint WE}
	\acro{LDPC}{low-density parity-check}
	\acro{LHS}{left-hand side}
	\acro{LLR}{log-likelihood ratio}
	\acro{MAP}{maximum a-posteriori}
	\acro{MC}{metaconverse}
	\acro{ML}{maximum likelihood}
	\acro{NA}{normal approximation}
	\acro{PC}{product code}
	\acro{pdf}{probability density function}
	\acro{RCB}{random coding bound}
	\acro{RCUB}{random coding union bound}
	\acro{RM}{Reed--Muller}
	\acro{r-RM}{random RM}
	\acro{RHS}{right-hand side}
	\acro{RV}{random variable}
	\acro{SPC}{single parity-check}
	\acro{SC}{successive cancellation}
	\acro{SCC}{super component codes}
	\acro{SCL}{successive cancellation list}
	\acro{SISO}{soft-input soft-output}
	\acro{SNR}{signal-to-noise ratio}
	\acro{UB}{union bound}
	\acro{TUB}{truncated union bound}
	\acro{VN}{variable node}
	\acro{WE}{weight enumerator}
	\acro{WEF}{weight enumerating function}
\end{acronym}
	\title{Successive Cancellation Inactivation Decoding for Modif\hspace{0.00001mm}ied Reed-Muller and eBCH Codes}

	\author{\IEEEauthorblockN{Mustafa Cemil Co\c{s}kun\IEEEauthorrefmark{1}\IEEEauthorrefmark{2}\IEEEauthorrefmark{3}, Joachim Neu\IEEEauthorrefmark{7} and Henry D. Pfister\IEEEauthorrefmark{3}}
	
	\IEEEauthorblockA{\IEEEauthorrefmark{1}German Aerospace Center (DLR)\, \IEEEauthorrefmark{2}Technical University of Munich (TUM)\, \IEEEauthorrefmark{3}Duke University\, \IEEEauthorrefmark{7}Stanford University \\
	Email: mustafa.coskun@tum.de, jneu@stanford.edu, henry.pfister@duke.edu
	}
}

	\maketitle

	\begin{abstract}
		A successive cancellation (SC) decoder with inactivations is proposed as an efficient implementation of SC list (SCL) decoding over the binary erasure channel. The proposed decoder assigns a dummy variable to an information bit whenever it is erased during SC decoding and continues with decoding. Inactivated bits are resolved using information gathered from decoding frozen bits. This decoder leverages the structure of the Hadamard matrix, but can be applied to any linear code by representing it as a polar code with dynamic frozen bits. SCL decoders are partially characterized using density evolution to compute the average number of inactivations required to achieve the maximum a-posteriori decoding performance. The proposed measure quantifies the performance vs. complexity trade-off and provides new insight into dynamics of the number of paths in SCL decoding.
		The technique is applied to analyze Reed-Muller (RM) codes with dynamic frozen bits. It is shown that these modified RM codes perform close to extended BCH codes.
	\end{abstract}

	\section{Introduction}\label{sec:intro}

Since their introduction in \cite{muller}, various decoding algorithms for \ac{RM} codes have been proposed to achieve performance close to \ac{MAP} with reduced complexity (see, e.g., \cite{reed,Dumer04,Dumer06,Dumer:List06,Santi18,mengke19,Ye19,Ivanov19}).
Recently, it has been shown that \ac{RM} codes can achieve capacity on the \ac{BEC} under \ac{MAP} decoding \cite{kudekar17}.

Polar codes are the first provably capacity-achieving codes with low encoding and decoding complexity for arbitrary symmetric \acp{B-DMC}\cite{arikan2009channel}.
Using \ac{SCL} decoding \cite{scld}, the addition of a high-rate outer code can make them very competitive in the short- to moderate-length regime (i.e., from $128$ to $1024$ bits) \cite{Coskun18:Survey}.
Since \ac{RM} codes are closely related to polar codes \cite{arikan2009channel} but outperform them under \ac{MAP} decoding \cite{Mondelli14}, some decoders proposed for polar codes have been also used for \ac{RM} codes, e.g., \cite{Ivanov19}. As the complexity of the \ac{SCL} decoder tends to be large when used to decode \ac{RM} codes, hybrid designs \cite{Mondelli14,RMpolar14,FabregasSiegel17} have been considered to trade performance for decoding complexity. In addition, the authors of \cite{trifonov16} showed how any linear code could be viewed as a polar code with dynamic frozen bits. They also constructed subcodes of \ac{eBCH} codes and decoded them using \ac{SCL} decoding by representing them as polar codes with dynamic frozen bits. These codes, dubbed \ac{eBCH}-polar subcodes, allow one to trade complexity for performance. Different design algorithms of polar code variants for a given list size are provided by \cite{Yuan19,Rowshan19}.

In this work, \emph{\ac{SC} inactivation} decoding is proposed. It follows the same message passing schedule as the \ac{SCL} decoder. However, whenever an information bit decodes as erased, it is replaced by a dummy variable, i.e., it is inactivated, and the decision on it is postponed to the end of decoding process. The inactivated bits are resolved using information gathered from decoding frozen bits. This decoder is subsequently extended to solve for the inactivated bits during the decoding process.

Similar decoders have been proposed in the past. In particular, they have been studied for iterative \ac{BP} decoding of \ac{LDPC} \cite{ldpc} and raptor codes \cite{Shokrollahi06}. They are known to overcome high error floors due to stopping sets and to provide \ac{MAP} performance with a lower complexity than standard Gaussian elimination \cite{Pishro04,inactivation05,Measson08,Fran17}. A \ac{BP} decoder with inactivations was proposed for polar codes in \cite{Eslami10}, yielding an improved bit-error rate, but it appears to use a different decoding schedule. The authors of \cite{Measson08} proposed and analyzed the Maxwell decoder for \ac{LDPC} codes, which guesses a value for an erased bit whenever the \ac{BP} decoder is stuck. Their results demonstrate a fundamental relationship between \ac{BP} and \ac{MAP} decoding based on guessing. Inspired by that approach, we analyze the \ac{SC} inactivation decoder to quantify the complexity required to achieve \ac{MAP} performance. Based on the dynamics of the unresolved inactivations during the decoding process, new insights are provided to understand performance vs. complexity trade-off.

Numerical results are provided not only for standard polar and \ac{RM} codes but also for their variants with dynamic frozen bits. In particular, \ac{RM} codes with dynamic frozen bits perform close to \ac{eBCH} codes, which are known to be one of the best performing codes for short block-lengths\cite{Coskun18:Survey}.
	\section{Preliminaries}\label{sec:prelim}
In the following, $ x_a^b $ denotes the vector $ (x_a, x_{a+1}, \dots, x_b) $. If $ b < a $, it is void. Given a vector $x_1^n$ and a set $\mathcal{A}\subset\{1,\dots,n\}$, we write $x_{\mathcal{A}}$ for the subvector $(x_i:i\in\mathcal{A})$. The notation $x_{\mathcal{A}}\cdot y_{\mathcal{A}}$ is used for dot product of two binary vectors. The length-$n$ all-zero vector is denoted as $0^n$. We use capital letters for \acp{RV} and lower case letters for their realizations. We denote a \ac{B-DMC} by $ W : \mathcal{X} \rightarrow \mathcal{Y} $, with input alphabet $ \mathcal{X} = \{0,1\} $, output alphabet $ \mathcal{Y} $, and transition probabilities $ W(y|x) $ for $ x \in \mathcal{X} $ and $ y \in \mathcal{Y} $. The transition probabilities of $n$ independent uses of the same channel are denoted as $ W^n(y_1^n|x_1^n) = \prod_{i=1}^{n} W(y_i|x_i) $. We write \ac{BEC}($ \epsilon $) for the \ac{BEC} with erasure probability $ \epsilon $. The output alphabet of the \ac{BEC} is $\mathcal{Y} = \{0,1,\texttt{e}\}$, where $\texttt{e}$ denotes an erasure. We consider the \ac{BEC} in this work unless otherwise stated. The indicator function $\mathbbm{1}\{\mathrm{P}\}$ equals $1$ if the proposition $\mathrm{P}$ is true and $0$ otherwise. We use capital bold letters for matrices. For example, $\B_n$ denotes the $n\times n$ \emph{bit reversal} matrix \cite{arikan2009channel} and $\GK_{2}$ denotes the $2\times 2$ Hadamard matrix, i.e.,
\[
\GK_{2}\triangleq\begin{bmatrix}
1 & 0 \\ 
1 & 1
\end{bmatrix}.
\]

\subsection{Polar and Reed-Muller Codes}

\label{sec:polar_RM}

Consider the matrix $\G_{n} = \B_n\GK_{2}^{\otimes m}$, where $n\triangleq 2^m$ and $\GK_{2}^{\otimes m}$ is the $m$-fold Kronecker product of $\GK_{2}$. Both polar and \ac{RM} codes are generated by suitable row choices from $\G_{n}$\cite{stolte2002rekursive,arikan2009channel}.
Using $\G_{n}$, the transition probability from the input $u_1^n$ to the output $y_1^n$ is $W_n(y_1^n|u_1^n) \triangleq W^n(y_1^n|u_1^n\G_{n})$. Transition probabilities of the $i$-th \emph{bit-channel}, an artificial channel with the input $u_i$ and the output $(y_1^n,u_1^{i-1})$, are defined by
\begin{equation}
	W_n^{(i)}(y_1^n,u_1^{i-1}|u_i) \triangleq \sum_{u_{i+1}^n\in\mathcal{X}^{n-i}}\frac{1}{2^{n-1}}W_n(y_1^n|u_1^n).
\end{equation}
The code itself is defined by the set of information indices, $\mathcal{A}$. For example, an $(n,k)$ polar code is designed by finding the $k$ most reliable bit-channels with indices $i\in\{1,2,\dots,n\}$ under the assumption that $U_i$, $i\in\{1,2,\dots,n\}$, are \ac{i.i.d.}
uniform \acp{RV}. For a particular channel parameter, these indices can be found using density evolution \cite{arikan2009channel,Mori:2009CL}. For an $r$-th order \ac{RM} code of length $n$ and dimension $k = \sum_{i=0}^{r}\binom{m}{i}$, where $0\leq r\leq m$, the set $\mathcal{A}$ consists of the indices, $i\in\{1,2,\dots,n\}$, corresponding to the rows of $\G_{n}$ with the Hamming weight at least equal to $2^{m-r}$. In both cases, encoding is performed via $c_1^n = u_1^n\G_{n}$, with $u_i=0$ for the indices $i\in\mathcal{A}^c$ of \emph{frozen bits}. The remaining $k$ positions, $u_i$ for $i\in\mathcal{A}$, are allocated for information bits.

\subsection{Representing a Linear Code as a Variant of a Polar Code}
\label{sec:polar_ebch}
Let $\code$ be an $(n,k)$ code with a full-rank parity-check matrix $\boldsymbol{H}$. Consider a vector, $c_1^n$, defined by $c_1^n \triangleq u_1^n\G_{n}$. One can impose linear constraints on $u_1^n$, e.g., $u_1^n\boldsymbol{V}^T = 0^{n-k}$, such that $c_1^n$ is a codeword of $\code$. To this end, we write $c_1^n\boldsymbol{H}^T = u_1^n\G_{n}\boldsymbol{H}^T = 0^{n-k}$. Thus, choosing $\boldsymbol{V} = \boldsymbol{H}\G_n^T$ defines constraints on input vector, $u_1^n$, such that the code $\code$ is represented as a polar code with constraints on the frozen bits\cite{trifonov16}. Using this approach, a bit $u_i$ is called frozen if its value is always $0$ or determined solely by the bits $u_1^{i-1}$. Those which are not always $0$ are called dynamic frozen bits. A systematic way to determine the frozen indices and the constraints on them is to convert $\boldsymbol{V}$ into $\boldsymbol{V}'$ via elementary row operations, where each column of $\boldsymbol{V}'$ has the last non-zero entry in a distinct row $i$, $1\leq i\leq n-k$. Then, the row index $i$ of the non-zero entry in the columns of $\boldsymbol{V}'$ with a single $1$ means that $u_i$ is a static frozen bit, i.e., $u_i=0$. However, the bit $u_i$ is a dynamic frozen bit if a column has multiple $1$'s and $i$ is the row index of its last non-zero entry.
A dynamic frozen bit can be a linear combination of multiple information bits and this is reflected by a column in $\boldsymbol{V}'$ with the Hamming weight greater than $2$. For a given construction, we say the dynamic frozen bits are defined by $t$ information bits if $t$ is the cardinality of the subset of $\mathcal{A}$, consisting of all the indices of information bits used to define dynamic frozen bits.

In \cite{trifonov16}, the idea of dynamic frozen bits was used to define \ac{eBCH}-polar subcodes. The construction of an $(n,k)$ \ac{eBCH}-polar subcode starts by choosing $\code$ to be an $(n,k')$ \ac{eBCH} code with $k < k'$. Then, the frozen indices (for $\code$) are found. Finally, an additional set of $k'-k$ bits are frozen to $0$.

\subsection{Successive Cancellation and SC List Decoding over \ac{BEC}}

Upon observing the channel output $y_1^n$, the \ac{SC} decoding estimates the bit $u_i$ successively from $i=1$ to $i=n$ as
\begin{equation}
	\hat{u}_i = \left\{\begin{array}{lll}
	u_i &\text{if} \,\,\, i\in\mathcal{A}^c \\
	f_i(y_1^n,\hat{u}_1^{i-1})& \text{otherwise,}
	\end{array}
	\right.
	\label{eq:decision}
\end{equation}
by using the previously estimated bits $\hat{u}_1^{i-1}$ in the function
\begin{equation}
\hspace*{-2mm}f_i(y_1^n,\hat{u}_1^{i-1}) \triangleq \left\{\begin{array}{lll}
0& \text{if}& \Prob_{U_i|Y_1^n,U_1^{i-1}}(0|y_1^n,\hat{u}_1^{i-1}) = 1 \\
\texttt{e}& \text{if}& \Prob_{U_i|Y_1^n,U_1^{i-1}}(0|y_1^n,\hat{u}_1^{i-1}) = \frac{1}{2} \\
1& \text{if}& \Prob_{U_i|Y_1^n,U_1^{i-1}}(0|y_1^n,\hat{u}_1^{i-1}) = 0
\end{array}
\right.
\label{eq:decision_fnc}
\end{equation}
where the probabilities $\Prob_{U_i|Y_1^n,U_1^{i-1}}(0|y_1^n,\hat{u}_1^{i-1})$ are computed recursively under the assumption that $U_j$, $i<j\leq n$, are \ac{i.i.d.} uniform random bits\cite{arikan2009channel}. The inputs to the algorithm are $\Prob_{C_i|Y_i}(0|y_i)$, $i=1,\dots,n$. For standard \ac{SC} decoding, the process aborts with a frame error if $\hat{u}_i=\texttt{e}$ for any $i$.

We recall the genie-aided \ac{SC} decoder where, at each decoding stage $i$, the decoder is provided with the true prior bits $u_1^{i-1}$ by a genie\cite{arikan2009channel}. Let $P^{\mathrm{(SC)}}$ and $P^{\mathrm{(GA)}}$ denote the block error probabilities of the \ac{SC} and genie-aided \ac{SC} decoders, respectively. Then, we have the following relation \cite{Mori:2009SIT,Sasoglu12}.
\begin{prop}\label{prop:real_vs_genie}
	For any fixed $\mathcal{A}$, $P^{\mathrm{(SC)}} = P^{\mathrm{(GA)}}$.
\end{prop}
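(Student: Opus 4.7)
The plan is to show that over the BEC the SC and the genie-aided SC decoders commit errors on \emph{exactly} the same channel realizations, from which $P^{\mathrm{(SC)}} = P^{\mathrm{(GA)}}$ follows immediately. The key enabling fact is the BEC-specific property that the posterior $\Prob_{U_i|Y_1^n,U_1^{i-1}}(0|y_1^n,u_1^{i-1})$ takes values only in $\{0,\tfrac{1}{2},1\}$. Consequently, whenever $f_i(y_1^n, u_1^{i-1})$ returns a value in $\{0,1\}$, that value must coincide with the true input $u_i$, since otherwise the true conditional probability could not have been $0$ or $1$.

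Next I would establish, by induction on $i$, the following invariant: on any channel outcome for which the SC decoder has not yet aborted after processing positions $1,\dots,i-1$, one has $\hat{u}_1^{i-1} = u_1^{i-1}$. The base case $i=1$ is vacuous. For the inductive step, if $i-1 \in \mathcal{A}^c$ then $\hat{u}_{i-1}=0=u_{i-1}$ by \eqref{eq:decision}; otherwise, the inductive hypothesis gives $f_{i-1}(y_1^n,\hat{u}_1^{i-2}) = f_{i-1}(y_1^n,u_1^{i-2})$, and since the SC decoder has not aborted, this value is not $\texttt{e}$, hence equals $u_{i-1}$ by the observation above.

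Combining the two steps, at each stage $i$ the SC decoder computes the same posterior that the genie-aided decoder computes (because it supplies the same priors $u_1^{i-1}$). Therefore, the SC decoder outputs $\texttt{e}$ at position $i$ (and aborts) if and only if the genie-aided decoder erases position $i$ while having erased none of positions $1,\dots,i-1$. Since the genie-aided decoder never produces a wrong hard decision (its outputs lie in $\{0,\texttt{e},1\}$ and a non-$\texttt{e}$ output agrees with $u_i$ by the ternary posterior property), its block error event equals the event that \emph{some} $i\in\mathcal{A}$ is erased. This is identical, realization by realization, to the SC-abort event. Hence the error events coincide and $P^{\mathrm{(SC)}}=P^{\mathrm{(GA)}}$.

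There is no real obstacle; the only point requiring care is that the proof leans crucially on the ternary posterior property of the BEC. Over a general \ac{B-DMC} the inductive step breaks down, because a hard SC decision on a non-erased bit could be wrong and then contaminate all subsequent stages, so one obtains only $P^{\mathrm{(SC)}} \ge P^{\mathrm{(GA)}}$ in general. On the BEC, however, the SC decoder can only fail \emph{by erasing}, never by producing an incorrect hard decision, which closes the gap.
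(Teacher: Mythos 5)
Your proof is correct for the BEC, but it takes a genuinely different route from the paper. You exploit the BEC-specific fact that the posterior $\Prob_{U_i|Y_1^n,U_1^{i-1}}(\cdot\,|y_1^n,u_1^{i-1})$ is ternary-valued, so a non-erased SC decision is never wrong; hence the SC trajectory coincides with the true $u_1^{i-1}$ up to the first erasure, and the abort event equals the genie's error event realization by realization. The paper instead works for an arbitrary \ac{B-DMC}: it defines the first-error events $\mathcal{B}_i^{\mathrm{(SC)}}$ and the genie bit-error events $\mathcal{B}_i^{\mathrm{(GA)}}$ and proves by induction (Lemma~\ref{lem:union}) that $\bigcup_{i\le \ell}\mathcal{B}_i^{\mathrm{(SC)}}=\bigcup_{i\le \ell}\mathcal{B}_i^{\mathrm{(GA)}}$, using only that before the first error the SC decoder holds the correct priors --- no structure of the channel outputs is needed. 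Your argument buys a very transparent picture of \emph{why} equality holds on the BEC (the decoder can only erase, never mis-decide); the paper's argument buys generality, which it explicitly advertises (``valid for any \ac{B-DMC}'').

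One point in your closing remark is wrong and worth fixing. You claim that over a general \ac{B-DMC} one obtains only $P^{\mathrm{(SC)}}\ge P^{\mathrm{(GA)}}$. First, the inequality that follows trivially from $\mathcal{B}_i^{\mathrm{(SC)}}\subseteq\mathcal{B}_i^{\mathrm{(GA)}}$ is $P^{\mathrm{(SC)}}\le P^{\mathrm{(GA)}}$, not $\ge$. Second, equality in fact holds for any \ac{B-DMC}: on any realization where the genie-aided decoder errs, let $i^*$ be the \emph{first} such index; for all $j<i^*$ the genie decisions are correct, so by induction the SC decoder has $\hat{u}_1^{i^*-1}=u_1^{i^*-1}$ and therefore also errs at $i^*$. ``Contamination'' of later stages is irrelevant because a block error has already occurred at $i^*$. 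This is exactly the content of the paper's Lemma~\ref{lem:union}, so the restriction to the BEC in your argument is a convenience, not a necessity. Since this remark is an aside and not part of your proof of the stated proposition, the proof itself stands.
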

The proof is provided in the appendix for completeness and is valid for any \ac{B-DMC}.
Let $\epsilon_i\triangleq\Prob_{U_i|Y_1^n,U_1^{i-1}}(\texttt{e}|y_1^n,u_1^{i-1})$. Then, $P^{\mathrm{(SC)}} = P^{\mathrm{(GA)}}\leq\sum_{i\in\mathcal{A}}\epsilon_i$ due to Proposition \ref{prop:real_vs_genie}. Thus, the design of polar codes minimizes this upper bound on the block error probability for a given channel erasure rate $\epsilon$.

In \ac{SCL} decoding\cite{scld}, several instances of an \ac{SC} decoder are run in parallel, each having a different hypothesis on the previous estimates $\hat{u}_1^{i-1}$ at a decoding stage $i$. Each hypothesis $\hat{u}_1^{i-1}$ is referred to as a decoding path. Over the \ac{BEC}, if \eqref{eq:decision} provides an erasure for an information bit $u_i$ with $i\in\mathcal{A}$, then a path is duplicated to take into account both options, namely $\hat{u}_i = 0$ and $\hat{u}_i = 1$\cite{Mondelli14,Mondelli15}. Also, whenever a frozen bit is encountered, the number of active paths is halved if its decoded value contradicts its known value because this reveals that half of the paths are invalid \cite{Neu2019}. Note that the complexity of \ac{SCL} decoding is constrained by imposing a maximum list size $L$. We declare an error if, at any stage during the decoding process, the number of active paths exceeds $L$ or if more than one path is active at the end of the process.
	\section{SC Inactivation Decoding}
\label{sec:SC_guessing}
Assume $u_1^n$ is encoded and transmitted over the \ac{BEC}$(\epsilon)$, providing the channel output $y_1^n$. Consider the case where \eqref{eq:decision} provides an erasure, i.e., $\hat{u}_i = \texttt{e}$. Instead of duplicating the path as in \ac{SCL} decoding, the \ac{SC} inactivation decoder introduces a dummy variable $\tilde{u}_i$ and stores the decision as $\hat{u}_i = \tilde{u}_i$, called an inactivation event. It continues decoding with the next stages using the same schedule as for \ac{SC} decoding. Then, \eqref{eq:decision_fnc} is allowed to be a function of the previous inactivated variable. For example, assuming there is no other inactivation for the information bits in between, it can output either an erasure or a linear combination of $\tilde{u}_0 \triangleq 1$ and the previous variable $\tilde{u}_i$, i.e., $a_0\oplus a_i\tilde{u}_i$ with $a_{\{0,i\}}\in\{0,1\}^2$, for all bits $u_j$ with $j>i$. We have to separate the cases where (i) $u_j$ is an information bit and (ii) $u_j$ is a frozen bit. In case (i), if the function \eqref{eq:decision_fnc} outputs an erasure, the decoder inactivates another bit, namely $\hat{u}_j = \tilde{u}_j$. Otherwise, it continues with the decoding of the next bit by knowing that $\hat{u}_j = a_0\oplus a_i\tilde{u}_i$. In case (ii), if the decoder outputs an erasure or has a trivial combination, i.e, $a_{\{0,i\}}=0^2$, it sets $\hat{u}_j = 0$ and continues with the next bit. However, if it outputs a combination where $a_i=1$, then it learns the value of the previously inactivated bit as $\tilde{u}_i = a_0$. The \ac{SC} inactivation decoder stores the equation separately and keeps decoding with $\hat{u}_j=a_0\oplus a_i\tilde{u}_i$. 

In general, the decoder can have $g$ inactivations for the information bits $u_{\mathcal{G}\backslash\{0\}}$ with $\mathcal{G}\triangleq\{0,i_1,i_2,\dots,i_g\}$, $0<i_1<i_2<\dots<i_g<i$ before decoding $u_i$, i.e., $\hat{u}_{\mathcal{G}\backslash\{0\}} = \tilde{u}_{\mathcal{G}\backslash\{0\}}$. 
For some binary vector $a_\mathcal{G}$, the function $f_i$ is
\vspace{-1mm}
\begin{equation}
f_i(y_1^n,\hat{u}_1^{i-1}) \hspace{-1mm}\triangleq \hspace{-1mm}\left\{\begin{array}{lll}
\hspace{-2mm}a_{\mathcal{G}}\cdot \tilde{u}_{\mathcal{G}} & \text{if}\,\, \Prob_{U_i|Y_1^n,U_1^{i-1}}(a_{\mathcal{G}} \cdot \tilde{u}_{\mathcal{G}}|y_1^n,\hat{u}_1^{i-1}) = 1 \\
\texttt{e} &\text{otherwise.}
\end{array}
\right.
\label{eq:decision_fnc_guess}
\end{equation}
Assume that the decoder inactivates $g$ bits in total during a decoding attempt. Then, the final step of \ac{SC} inactivation decoding is to solve a system of linear equations in $g$ unknowns. This will have a unique solution only if the equations obtained from frozen bits have rank $g$. This algorithm is equivalent to an \ac{SCL} decoder over the \ac{BEC} with unbounded list size \cite[Appendix A]{Neu2019}, thus, it implements \ac{MAP} decoding.

We extend the inactivation decoder to include path pruning like \ac{SCL} decoding. The decoder's operation is unchanged whenever an information bit is encountered, i.e., in case (i) above. In case (ii), if \eqref{eq:decision_fnc_guess} does not deliver an erasure, it provides the equation $a_{\mathcal{G}}\cdot\tilde{u}_\mathcal{G} = 0$.\footnote{If it is a dynamic frozen bit, the right-hand side of the equation is the linear combination defining it (see Sec. \ref{sec:polar_ebch}). For simplicity, assume it is not.} If $a_\mathcal{G}$ has a non-zero term, the equation is solved for $\tilde{u}_{i_j}$ as $ \tilde{u}_{i_j} = a_{\mathcal{G}\backslash \{i_j\}}\cdot \tilde{u}_{\mathcal{G}\backslash \{i_j\}}$, $i_j = \max\{i\in\mathcal{G}:i\neq0\}$, and stored. This is called a \emph{consolidation} event. The decoder continues with $\hat{u}_j=0$. We declare an error if there remains any unresolved $\tilde{u}_{i}$ at the end.
Analysis of this decoder provides insights into dynamics of the number of paths in \ac{SCL} decoding for the \ac{BEC}.
	\section{Number of Inactivations for MAP Decoding}
Let $\hat{u}_1^{i-1}$ denote the decoding output of \ac{SC} inactivation decoding with possible inactivations before estimating $u_i$.
\begin{lemma}
	$f_i(y_1^n,u_1^{i-1}) = \texttt{e}$ if and only if $f_i(y_1^n,\hat{u}_1^{i-1}) = \texttt{e}$.
	\label{lem:guessing_vs_genie}
\end{lemma}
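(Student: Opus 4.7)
My plan is to show that, over the \ac{BEC}, the erasure indicator of $f_i(y_1^n,\cdot)$ depends only on $y_1^n$ whenever its prior argument is \emph{codeword-consistent}, i.e., extends to some $u_1^n$ with $u_1^n\G_{n}\sim y_1^n$ (agreement on non-erased coordinates). The true prior $u_1^{i-1}$ is trivially codeword-consistent, and I will show inductively that every dummy instantiation of the \ac{SC} inactivation decoder's output $\hat{u}_1^{i-1}$ is as well, so that both sides of the claimed equivalence are controlled by the same $y_1^n$-dependent dichotomy.

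I would first recast \eqref{eq:decision_fnc} combinatorially. Since $W_n(y_1^n|u_1^n)\propto\mathbbm{1}\{u_1^n\G_{n}\sim y_1^n\}$ over the \ac{BEC}, the condition $f_i(y_1^n,v_1^{i-1})=\texttt{e}$ is equivalent to both $(v_1^{i-1},0)$ and $(v_1^{i-1},1)$ extending to some $u_1^n$ in the affine set $S(y_1^n)\triangleq\{u_1^n:u_1^n\G_{n}\sim y_1^n\}$.

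The key invariance to establish is then: the affine structure of $S(y_1^n)$ implies that, for every codeword-consistent prior $v_1^{i-1}$, the admissible set $\{u_i:\exists u_{i+1}^n\text{ with }(v_1^{i-1},u_i,u_{i+1}^n)\in S(y_1^n)\}$ is a coset of a fixed subspace of $\{0,1\}$ determined solely by the direction subspace of $S(y_1^n)$. Hence this set has cardinality $1$ for every codeword-consistent prior, or cardinality $2$ for every such prior, with the dichotomy dictated only by $y_1^n$.

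Finally, an induction on $j<i$ confirms that every dummy instantiation of $\hat{u}_1^{j}$ is codeword-consistent: when the decoder commits to an affine combination in \eqref{eq:decision_fnc_guess}, that value is the unique extension compatible with codeword-consistency; when it inactivates $u_j$, the invariance above (applied under the induction hypothesis) ensures that either assignment of $\tilde{u}_j$ preserves codeword-consistency. Combining the pieces, $f_i(y_1^n,u_1^{i-1})=\texttt{e}$ iff the admissible set is $\{0,1\}$ iff $f_i(y_1^n,\hat{u}_1^{i-1})=\texttt{e}$, because the symbolic erasure condition in \eqref{eq:decision_fnc_guess} is equivalent to standard erasure at some---and hence, by invariance, every---dummy instantiation. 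I expect the main technical obstacle to be the invariance step: one must rule out the possibility that different codeword-consistent priors yield differently sized admissible sets, which I would handle via the affine geometry of $S(y_1^n)$ and its projection onto the first $i$ coordinates.
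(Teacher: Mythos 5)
Your argument is correct, but it takes a genuinely different route from the paper's. The paper proves both directions by contradiction using a single substitution: if \eqref{eq:decision_fnc_guess} returns an affine form $a_\mathcal{G}\cdot\tilde{u}_\mathcal{G}$ with posterior $1$, then instantiating the dummies at their \emph{true} values gives $\Prob_{U_i|Y_1^n,U_1^{i-1}}(a_\mathcal{G}\cdot u_\mathcal{G}|y_1^n,u_1^{i-1})=1$, contradicting a genie-aided erasure; conversely, if the genie-aided decoder determines $u_i$, some $a_\mathcal{G}$ must satisfy \eqref{eq:decision_fnc_guess}, contradicting a symbolic erasure. You instead establish an invariance that is strictly stronger: with $V$ the direction space of the affine set $S(y_1^n)$, the admissible set for $u_i$ over any codeword-consistent prefix is a coset of $\pi_i\bigl(V\cap\ker\pi_{<i}\bigr)$, so its cardinality ($1$ or $2$) depends on $y_1^n$ alone, and an induction shows every dummy instantiation of $\hat{u}_1^{i-1}$ is codeword-consistent. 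This costs more machinery but buys a justification the paper leaves implicit: the paper's second direction simply asserts that a suitable $a_\mathcal{G}$ exists, whereas your coset picture explains why---when the admissible set is a singleton for every instantiation, the unique value is an affine function of the dummies and is therefore representable by the decoder's bookkeeping. The one step to spell out is that your induction must carry the equivalence ``symbolic erasure at stage $j$ iff instantiated erasure'' alongside codeword-consistency, since concluding that both assignments of $\tilde{u}_j$ after an inactivation remain consistent uses that the instantiated admissible set at stage $j$ really was $\{0,1\}$; the interlocked induction closes without circularity, but it should be stated as a joint hypothesis.
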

\begin{proof}
	The case where $\hat{u}_1^{i-1}$ does not contain any inactivation, i.e., $\hat{u}_1^{i-1}=u_1^{i-1}$, is trivial.
	
	Thus, we assume that the decoder inactivated some information bits, i.e., $\hat{u}_j = \tilde{u}_j$ for some $j$, $1\leq j<i$.
	
	Now, suppose that (a) $f_i(y_1^n,u_1^{i-1}) = \texttt{e}$, meaning $\Prob_{U_i|Y_1^n,U_1^{i-1}}(u_i|y_1^n,u_1^{i-1}) = \nicefrac{1}{2}$ for both $u_i\in\{0,1\}$, and that (b) we have a vector $a_\mathcal{G}$ such that $\Prob_{U_i|Y_1^n,U_1^{i-1}}(a_\mathcal{G}\cdot \tilde{u}_\mathcal{G}|y_1^n,\hat{u}_1^{i-1}) = 1$; thus, $f_i(y_1^n,\hat{u}_1^{i-1}) \neq \texttt{e}$. But, (b) implies $\Prob_{U_i|Y_1^n,U_1^{i-1}}(a_\mathcal{G}\cdot u_\mathcal{G}|y_1^n,u_1^{i-1}) = 1$ by replacing inactivated bits with their values and having $u_0 = \tilde{u}_0$ (equivalently, $u_0 = 1$). This contradicts (a).
	
	Now, consider the other direction, i.e., suppose that (c) $f_i(y_1^n,u_1^{i-1}) = u_i$ and (d) $f_i(y_1^n,\hat{u}_1^{i-1}) = \texttt{e}$. Then, (c) implies that $\Prob_{U_i|Y_1^n,U_1^{i-1}}(u_i|y_1^n,u_1^{i-1}) = 1$ for some $u_i\in\{0,1\}$. Then, there exists a vector $a_\mathcal{G}$ for which we have
	\vspace{-1mm}
	\begin{equation}
		 \Prob_{U_i|Y_1^n,U_1^{i-1}}(a_\mathcal{G}\cdot\tilde{u}_\mathcal{G}|y_1^n,\hat{u}_1^{i-1})\Big\rvert_{a_\mathcal{G}\cdot\tilde{u}_\mathcal{G} = u_i, \hat{u}_1^{i-1} = u_1^{i-1}} = 1
		 \vspace{-1mm}
	\end{equation}
	and this contradicts (d).
\end{proof}
\vspace*{-2mm}
Note that density evolution is able to compute the probabilities $\epsilon_i$ exactly, i.e., the erasure probabilities of the genie-aided \ac{SC} decoder\cite{arikan2009channel}.
\vspace*{-2mm}
\begin{lemma}
	Let $b_i$ denote the probability of having an inactivation for $u_i$ in the \ac{SC} inactivation decoder. Then,
	\[b_i = \left\{\begin{array}{lll}
	0 &\text{if} \,\,\, i\in\mathcal{A}^{c} \\
	\epsilon_i & \text{otherwise.}
	\end{array}
	\right.
	\]
	\label{lem:guessing_prob}
\end{lemma}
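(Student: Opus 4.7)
The plan is to split along the two cases. For $i \in \mathcal{A}^{c}$, the definition of SC inactivation decoding in Section \ref{sec:SC_guessing} introduces a dummy variable $\tilde u_i$ only when an \emph{information}-bit decision \eqref{eq:decision_fnc_guess} returns an erasure; frozen positions are never inactivated (they either contribute a consolidation equation or are consistent with $\hat u_j = 0$). Hence $b_i = 0$ in this case, by inspection of the algorithm alone.

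For $i \in \mathcal{A}$, the event ``$u_i$ is inactivated'' is, by construction of the decoder, precisely the event $\{f_i(y_1^n,\hat u_1^{i-1}) = \texttt{e}\}$. The key step is to invoke Lemma \ref{lem:guessing_vs_genie}, which asserts pointwise in $(y_1^n,u_1^n)$ that this event occurs if and only if the genie-aided decoder's decision function produces an erasure at stage $i$, i.e., $\{f_i(y_1^n,u_1^{i-1}) = \texttt{e}\}$. Since the two events coincide pointwise, they trivially have identical probabilities under the channel law, and the probability of the latter is, by the definition stated right after Proposition \ref{prop:real_vs_genie}, exactly $\epsilon_i$. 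This yields $b_i = \epsilon_i$.

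There is no substantive obstacle: all the nontrivial content has already been absorbed into Lemma \ref{lem:guessing_vs_genie}. The only point worth stressing is that the equivalence in that lemma is pointwise in the channel realization and the true message, rather than merely an equality in distribution. That is precisely what permits the last step above -- passing from equality of events to equality of probabilities -- without any additional measure-theoretic argument, and without needing to re-examine how often the specific inactivation pattern $\hat u_1^{i-1}$ differs from the transmitted prefix $u_1^{i-1}$.
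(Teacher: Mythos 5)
Your proof is correct and follows essentially the same route as the paper's: the frozen-bit case is dispatched by inspection of the decision rule \eqref{eq:decision}, and the information-bit case reduces to Lemma \ref{lem:guessing_vs_genie}, whose pointwise equivalence of the two erasure events gives $\Pr[f_i(y_1^n,\hat{u}_1^{i-1}) = \texttt{e}] = \Pr[f_i(y_1^n,u_1^{i-1}) = \texttt{e}] = \epsilon_i$. The paper phrases this last step via expectations of indicator functions, but the content is identical to your observation that pointwise-equal events have equal probabilities.
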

\vspace*{-6mm}
\begin{proof}
	Let $b_i'$ denote $\Pr[f_i(y_1^n,\hat{u}_1^{i-1}) = \texttt{e}]$. Then, we have
	$b_i' \numeq{\text{a}} \E[\mathbbm{1}\{f_i(y_1^n,\hat{u}_1^{i-1}) = \texttt{e}\}] \numeq{\text{b}} \E[\mathbbm{1}\{f_i(y_1^n,u_1^{i-1}) = \texttt{e}\}] \numeq{\text{c}} \epsilon_i$ where ($\text{a}$) and ($\text{c}$) follows from the definition of probability and ($\text{b}$) from Lemma \ref{lem:guessing_vs_genie}. The result follows from \eqref{eq:decision}.
\end{proof}
\vspace*{-3mm}
\begin{corollary}
	Let $G$ be a \ac{RV} equal to the total number of inactivations made by the decoder during a decoding attempt. Then, $\E[G] = \sum_{i\in\mathcal{A}}\epsilon_i$.
	\label{cor:guessing}
\end{corollary}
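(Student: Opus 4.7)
My plan is to express $G$ as a sum of Bernoulli indicators and invoke linearity of expectation together with Lemma~\ref{lem:guessing_prob}. Specifically, for each stage $i\in\{1,\dots,n\}$ let $X_i \triangleq \mathbbm{1}\{f_i(y_1^n,\hat{u}_1^{i-1}) = \texttt{e}\text{ and }i\in\mathcal{A}\}$ be the indicator that the decoder performs an inactivation when processing $u_i$. Since an inactivation happens exactly when the decision function returns an erasure on an information position, we have $G = \sum_{i=1}^n X_i$ by construction of the SC inactivation decoder.

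Next, I would take expectations on both sides. By linearity, $\E[G] = \sum_{i=1}^n \E[X_i] = \sum_{i=1}^n \Pr[X_i = 1]$, and the summand $\Pr[X_i=1]$ is precisely the quantity $b_i$ introduced in Lemma~\ref{lem:guessing_prob}. Applying that lemma, $b_i = 0$ whenever $i\in\mathcal{A}^c$ (frozen positions never trigger an inactivation, which is immediate from the decoder definition in Section~\ref{sec:SC_guessing}) and $b_i = \epsilon_i$ whenever $i\in\mathcal{A}$. Thus the sum collapses to $\E[G] = \sum_{i\in\mathcal{A}}\epsilon_i$, as claimed.

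There is no real obstacle here: the content is already contained in Lemma~\ref{lem:guessing_prob}, and the corollary is just a one-line bookkeeping step. The only subtle point worth a remark is that $X_i$ is a function of the (random) channel output $Y_1^n$ and of the prior inactivation pattern of the decoder, but Lemma~\ref{lem:guessing_vs_genie} guarantees that the occurrence of an erasure at stage $i$ does not depend on which prior bits were inactivated versus observed. This is what allowed Lemma~\ref{lem:guessing_prob} to equate $b_i$ with the genie-aided erasure probability $\epsilon_i$ in the first place, so no additional justification is needed at the corollary stage.
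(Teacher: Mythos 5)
Your proof is correct and follows the same route as the paper: the paper's proof is the one-liner ``follows from the expectation of inactivation indicator events and Lemma~\ref{lem:guessing_prob},'' which is precisely your decomposition of $G$ into indicators plus linearity of expectation. Your added remark about why Lemma~\ref{lem:guessing_vs_genie} underpins Lemma~\ref{lem:guessing_prob} is accurate but not needed at this stage, exactly as you note.
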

\vspace*{-3mm}
\begin{proof}
	Follows from the expectation of inactivation indicator events and Lemma \ref{lem:guessing_prob}. 
\end{proof}
\vspace*{-2mm}
Corollary \ref{cor:guessing} describes the average number of inactivations required with a code defined by $\mathcal{A}$ for \ac{MAP} performance when the transmission is over the \ac{BEC}$(\epsilon)$, providing the expected number of unknowns for the resulting linear system.\footnote{This relation was first observed in \cite[Appendix A]{Neu2019} for \ac{SCL} decoding with unbounded list size, where an inactivation event is replaced by a branching event as an \ac{SCL} decoder \emph{branches} paths if it encounters an erasure.}

For the \ac{SC} inactivation decoder with consolidations, let $G_i$ be a \ac{RV} denoting the number of unresolved inactivations after decoding $u_i$. Its expectation $\E[G_i]$ is important for analyzing the error probability of an \ac{SC} inactivation decoder with a fixed maximum number of inactivated bits $g_{\mathrm{max}}$ as well as that of an \ac{SCL} decoder with a list size $L$  set to $L=2^{g_{\mathrm{max}}}$.
In Section \ref{sec:numerical}, numerical estimates of $\E[G_i]$ are shown.

\begin{remark}
	The performance improvement under \ac{MAP} decoding when interpolating from polar to \ac{RM} codes is driven by the weight spectrum improvement, e.g., the minimum distance increases\cite{RMpolar14,Mondelli14}. This comes at the cost of a higher MAP decoding complexity. The quantity $\E[G]$ is obtained from analyzing the inactivation decoder and this allows us to quantify this complexity increase. Another way to improve the distance spectrum is to embed dynamic frozen bits \cite{trifonov16}. Although the number of inactivated bits remains unaffected by the use of dynamic frozen bits, they do add extra complexity to the decoder. The additional complexity is related to the total number of information bits used to define dynamic frozen bits because it  affects the sparsity of the binary vector operations.
\end{remark}
	\section{Numerical Results}
\label{sec:numerical}
In this section, we consider (i) polar codes, (ii) \ac{RM} codes, (iii) \ac{eBCH}-polar subcodes (with $7$ dynamic frozen bits) \cite{Yuan19} as well as (iv) \ac{RM} codes with dynamic frozen bits. Numerical results are provided for rate $R = \nicefrac{1}{2}$ codes of length $n\in\{128,512\}$ using a \ac{MAP} decoding implemented via the \ac{SC} inactivation decoder. Note that the polar codes are designed for the erasure probability $\epsilon=0.4$ via density evolution \cite{arikan2009channel}. The Singleton bound (SB)\cite{Singleton64}, a lower bound on the block error probability of any binary linear code, and the Berlekamp random coding bound (BRCB) \cite{Berlekamp80}, a tight upper bound on the average block error probability of the linear code ensemble defined via parity-check matrices, are provided as benchmark.

In Fig. \ref{fig:performance}, the \acp{BLER} are shown.
For any length $n$, \ac{RM} codes outperform polar codes. The \ac{eBCH} code performs very close to an instance from the $(128,64)$ random code ensemble. It is constructed with the idea explained in Sec. \ref{sec:polar_ebch} by allocating the first $k$ positions for information bits and having all frozen bits as dynamic, which are set to random linear combinations of all information bits. The \ac{BLER} of the \ac{eBCH}-polar subcode is also provided as a reference and it performs slightly better than the \ac{RM} code.
	
In Fig. \ref{fig:complexity}, the expected numbers of inactivations $\E[G]$ from Corollary \ref{cor:guessing}  are provided together with the results obtained from simulations for $n=128$, demonstrating that the analysis is exact. In addition to larger number of inactivations, the \ac{eBCH} code has $35$ dynamic frozen bits, which are defined by in total $35$ information bits. Therefore, the average decoding complexity for \ac{eBCH} code is much higher than for the others. For this blocklength, surprisingly, $\E[G_{\mathrm{RM}}]$ is close to $\E[G_{\mathrm{eBCH-pol}}]$, where the \ac{eBCH}-polar subcode has an additional complexity due to dynamic frozen bits.

In Fig. \ref{fig:performance_dynamic}, the \acp{BLER} for two variants of \ac{RM} codes with dynamic frozen bits are provided. In d-RM codes, all frozen bits are set to random linear combinations of the previous information bit(s). For $n=128$, the d-RM code's performance is close to that of the \ac{eBCH} code. The second variant, $7$d-RM code, is designed for a lower decoding complexity compared to the d-RM code by declaring all but the last $7$ frozen bits as static. The dynamic frozen bits are set to random linear combinations of first $10$ information bits since they are more likely to be erased. This code performs within an erasure probability gap of $0.04$ from the \ac{eBCH} code at a \ac{BLER} $\approx 10^{-6}$. For $n=512$, the flattening in the curve of the \ac{RM} code at \ac{BLER} $\approx 10^{-5}$ is avoided by the d-\ac{RM}, performing close to the SB down to a \ac{BLER} $\approx 10^{-7}$.
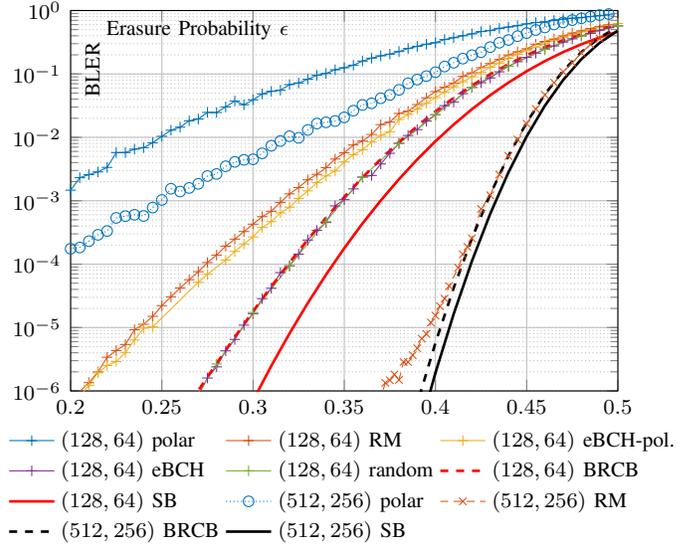
\begin{figure}    
	\centering
	\begin{tikzpicture}
\begin{semilogyaxis}[
	mark options={solid,scale=0.75},
    width=1*\linewidth,
    height=0.75*\columnwidth,
    title style={font=\footnotesize,align=center},
    legend cell align=left,
    legend style={font=\footnotesize},
    legend columns=3,
    legend style={at={(0.5,-0.075)},anchor=north,draw=none,/tikz/every even column/.append style={column sep=0mm},cells={align=left}},
    ylabel near ticks,
    xlabel near ticks,
	xmin=0.2,
	xmax=0.5,
	ymin=1e-6,
	ymax=1,
	xlabel={\textcolor{black}{Erasure Probability $\epsilon$}},
	ylabel={\textcolor{black}{BLER}},
	x label style={at={(axis description cs:0.0,0.9)},anchor=south west,yshift=0em,xshift=1em},
	y label style={at={(axis description cs:0.01,0.75)},anchor=north west,yshift=0em,xshift=0em},
	grid=both,
	major grid style={solid,draw=gray!50},
	minor grid style={densely dotted,draw=gray!50},
	label style={font=\footnotesize},
	tick label style={font=\footnotesize},
]

\addplot[myparula11] table[x=erasure,y=P] {figures/128_64_polar_bec_designE=04.txt};
\addlegendentry{$(128,64)$ polar}
\addplot[myparula21] table[x=erasure,y=P] {figures/128_64_RM_bec.txt};
\addlegendentry{$(128,64)$ RM}
\addplot[myparula31] table[x=erasure,y=P] {figures/128_64_eBCHpeihong_bec.txt};
\addlegendentry{$(128,64)$ eBCH-pol.}
\addplot[myparula41] table[x=erasure,y=P] {figures/128_64_ebch_bec.txt};
\addlegendentry{$(128,64)$ eBCH}
\addplot[myparula51] table[x=erasure,y=P] {figures/128_64_random_bec.txt};
\addlegendentry{$(128,64)$ random}
\addplot[color=red,line width=1pt,dashed] table[x=erasure,y=P] {figures/128_64_berlekamp_bec.txt};
\addlegendentry{$(128,64)$ BRCB}
\addplot[color=red,line width=1pt,solid] table[x=erasure,y=P] {figures/128_64_singleton_bec.txt};
\addlegendentry{$(128,64)$ SB}

\addplot[myparula13] table[x=erasure,y=P] {figures/512_256_polar_bec_designE=04.txt};
\addlegendentry{$(512,256)$ polar}
\addplot[myparula22] table[x=erasure,y=P] {figures/512_256_RM_bec.txt};
\addlegendentry{$(512,256)$ RM}
\addplot[color=black,line width = 1pt,dashed] table[x=erasure,y=P] {figures/512_256_berlekamp_bec.txt};
\addlegendentry{$(512,256)$ BRCB}
\addplot[color=black,line width = 1pt,solid] table[x=erasure,y=P] {figures/512_256_singleton_bec.txt};
\addlegendentry{$(512,256)$ SB}

\end{semilogyaxis}
\end{tikzpicture}
	\vspace*{-8mm}
	\caption{BLER vs. $\epsilon$ for different codes.}\label{fig:performance}
\end{figure}
\begin{figure}    
	\centering
	\begin{tikzpicture}
\begin{axis}[
	mark options={solid,scale=0.75},
    width=1*\linewidth,
    height=0.65*\columnwidth,
    title style={font=\footnotesize,align=center},
    legend cell align=left,
    legend style={font=\footnotesize},
    legend columns=2,
    legend style={at={(0.5,-0.1)},anchor=north,draw=none,/tikz/every even column/.append style={column sep=1mm},cells={align=left}},
    ylabel near ticks,
    xlabel near ticks,
	xmin=0.1,
	xmax=0.5,
	ymin=0,
	ymax=15,
	xlabel={\textcolor{black}{Erasure Probability $\epsilon$}},
	ylabel={\textcolor{black}{Inactivations $\E[G]$}},
	x label style={at={(axis description cs:0.01,0.875)},anchor=south west,yshift=0em,xshift=1em},
	y label style={at={(axis description cs:0.0,0.42)},anchor=north west,yshift=0em,xshift=0em},
	grid=both,
	major grid style={solid,draw=gray!50},
	minor grid style={densely dotted,draw=gray!50},
	label style={font=\footnotesize},
	tick label style={font=\footnotesize},
]

\addplot[color=myParula04Purple,line width = 1pt,solid] table[x=erasure,y=MI_loss] {figures/128_64_eBCH_bec_MI_loss.txt};
\addlegendentry{$\E[G_{\mathrm{eBCH}}]$}

\addplot[color=myParula02Orange,line width = 1pt,solid] table[x=erasure,y=MI_loss] {figures/128_64_RM_bec_MI_loss.txt};
\addlegendentry{$\E[G_{\mathrm{RM}}]$}

\addplot[color=myParula03Yellow,line width = 1pt,solid] table[x=erasure,y=MI_loss] {figures/128_64_eBCHpeihong_bec_MI_loss.txt};
\addlegendentry{$\E[G_{\mathrm{eBCH-pol}}]$}

\addplot[color=myParula01Blue,line width = 1pt,solid] table[x=erasure,y=MI_loss] {figures/128_64_polar_bec_MI_loss_designE=04.txt};
\addlegendentry{$\E[G_{\mathrm{polar}}]$}

\addplot[color=myParula04Purple,only marks,mark=x] table[x=erasure,y=branching] {figures/128_64_eBCH_bec_branching_all.txt};

\addplot[color=myParula02Orange,only marks,mark=x] table[x=erasure,y=branching] {figures/128_64_RM_bec_branching_all.txt};

\addplot[color=myParula03Yellow,only marks,mark=x] table[x=erasure,y=branching] {figures/128_64_eBCHpeihong_bec_branching_all.txt};

\addplot[color=myParula01Blue,only marks,mark=x] table[x=erasure,y=branching] {figures/128_64_polar_bec_branching_all.txt};

\end{axis}
\end{tikzpicture}
	\vspace*{-3mm}
	\caption{$\E[G]$ vs. $\epsilon$ for the codes in Fig. \ref{fig:performance} with $n=128$ (solid lines: from Corollary \ref{cor:guessing}, markers: Monte-Carlo simulation averages).}\label{fig:complexity}
\end{figure}
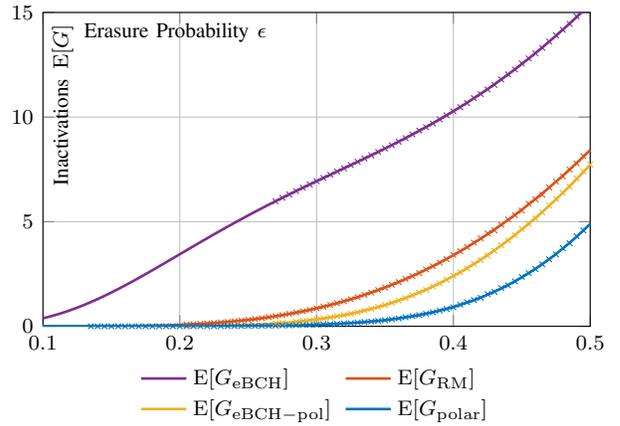
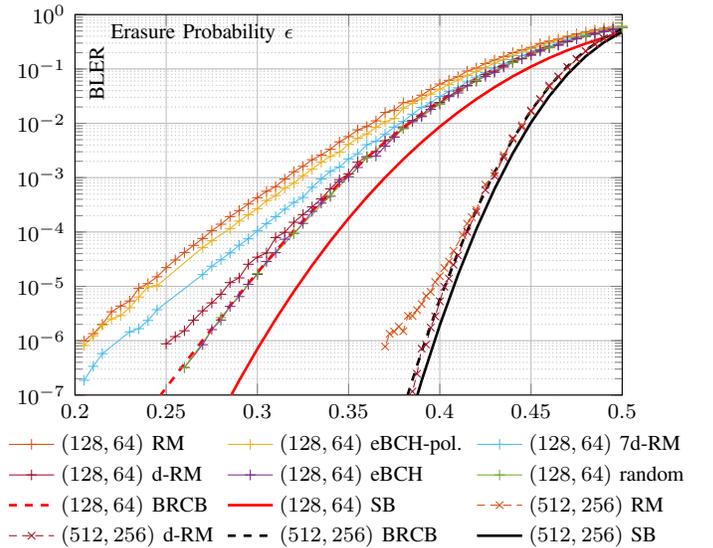
\begin{figure}
	\centering
	\begin{tikzpicture}
\begin{semilogyaxis}[
mark options={solid,scale=0.75},
width=1*\linewidth,
height=0.75*\columnwidth,
title style={font=\footnotesize,align=center},
legend cell align=left,
legend style={font=\footnotesize},
legend columns=3,
legend style={at={(0.5,-0.075)},anchor=north,draw=none,/tikz/every even column/.append style={column sep=1mm},cells={align=left}},
ylabel near ticks,
xlabel near ticks,
xmin=0.2,
xmax=0.5,
ymin=1e-7,
ymax=1,
xlabel={\textcolor{black}{Erasure Probability $\epsilon$}},
ylabel={\textcolor{black}{BLER}},
x label style={at={(axis description cs:0.0,0.9)},anchor=south west,yshift=0em,xshift=1em},
y label style={at={(axis description cs:0.01,0.75)},anchor=north west,yshift=0em,xshift=0em},
grid=both,
major grid style={solid,draw=gray!50},
minor grid style={densely dotted,draw=gray!50},
label style={font=\footnotesize},
tick label style={font=\footnotesize},
]

\addplot[myparula21] table[x=erasure,y=P] {figures/128_64_RM_bec.txt};
\addlegendentry{$(128,64)$ RM}
\addplot[myparula31] table[x=erasure,y=P] {figures/128_64_eBCHpeihong_bec.txt};
\addlegendentry{$(128,64)$ eBCH-pol.}
\addplot[myparula61] table[x=erasure,y=P] {figures/128_64_proposedRMdynamic_bec.txt};
\addlegendentry{$(128,64)$ 7d-RM}
\addplot[myparula71] table[x=erasure,y=P] {figures/128_64_RMsubcode_bec.txt};
\addlegendentry{$(128,64)$ d-RM}
\addplot[myparula41] table[x=erasure,y=P] {figures/128_64_ebch_bec.txt};
\addlegendentry{$(128,64)$ eBCH}
\addplot[myparula51] table[x=erasure,y=P] {figures/128_64_random_bec.txt};
\addlegendentry{$(128,64)$ random}
\addplot[color=red,line width=1pt,dashed] table[x=erasure,y=P] {figures/128_64_berlekamp_bec.txt};
\addlegendentry{$(128,64)$ BRCB}
\addplot[color=red,line width=1pt,solid] table[x=erasure,y=P] {figures/128_64_singleton_bec.txt};
\addlegendentry{$(128,64)$ SB}
\addplot[myparula22] table[x=erasure,y=P] {figures/512_256_RM_bec.txt};
\addlegendentry{$(512,256)$ RM}
\addplot[myparula72] table[x=erasure,y=P] {figures/512_256_RMsubcode_bec.txt};
\addlegendentry{$(512,256)$ d-RM}
\addplot[color=black,line width = 1pt,dashed] table[x=erasure,y=P] {figures/512_256_berlekamp_bec.txt};
\addlegendentry{$(512,256)$ BRCB}
\addplot[color=black,line width = 1pt,solid] table[x=erasure,y=P] {figures/512_256_singleton_bec.txt};
\addlegendentry{$(512,256)$ SB}

\end{semilogyaxis}
\end{tikzpicture}
	\vspace*{-8mm}
	\caption{BLER vs. $\epsilon$ for the codes with dynamic frozen bits (also RM codes).}\label{fig:performance_dynamic}
\end{figure}

Fig. \ref{fig:inac_evo_polar} shows the average number of unresolved inactivations $\E[G_i]$ as the \ac{SC} inactivation decoder with consolidations proceeds from $i=1$ to $i=n$ with $n=128$ at $\epsilon=0.4$ for the polar and \ac{RM} code. This highlights the performance vs. complexity trade-off. The code with a better performance, i.e., \ac{RM} code (see Fig. \ref{fig:performance}), has more inactivations at early stages due to more information bits at unreliable positions; hence a larger decoding complexity. The frozen bits placed at fairly reliable positions help resolve the inactivations, yielding a better performance. The decoder is not able to resolve the inactivated bits for the polar code because of the lack of frozen bits at reliable positions appearing after inactivations. Hence, $\E[G_i]$ provides a measure to quantify performance vs. complexity trade-off. On the one hand, many unresolved inactivations increase complexity and are, hence, undesired. On the other hand, too few inactivations to begin with do not make best use of the information provided by frozen bits when it comes resolving inactivations.

In Fig. \ref{fig:inac_evo}, $\E[G_i]$ is provided for codes with dynamic frozen bits. Observe the large number of inactivations for the random code, where they are mostly resolved at the end. The \ac{eBCH} and the d-\ac{RM} codes provide a similar performance (see Fig. \ref{fig:performance_dynamic}) with a lower complexity compared to the random code. In addition, the \ac{eBCH}-polar code has the lowest complexity; yet, with a degraded performance (see Fig. \ref{fig:performance_dynamic}). The 7d-RM code is an exemplary construction for a code performing halfway between the \ac{eBCH}-polar and d-\ac{RM} codes (see Fig. \ref{fig:performance_dynamic}). An analysis of the additional complexity due to dynamic frozen bits is left to future work.

\begin{figure}    
	\centering
	\begin{tikzpicture}
\begin{axis}[
mark options={solid,scale=0.75},
width=1*\linewidth,
height=0.5*\columnwidth,
title style={font=\footnotesize,align=center},
legend cell align=left,
legend style={font=\footnotesize},
legend columns=2,
legend style={at={(0.5,-0.125)},anchor=north,draw=none,/tikz/every even column/.append style={column sep=1mm},cells={align=left}},
ylabel near ticks,
xlabel near ticks,
xmin=1,
xmax=128,
ymin=0,
ymax=3,
xlabel={\textcolor{black}{Bit index $i$}},
ylabel={\textcolor{black}{\scriptsize{Unresol. Inactiv. $\E[G_i]$}}},
x label style={at={(axis description cs:0.02,0.8)},anchor=south west,yshift=0em,xshift=1em},
y label style={at={(axis description cs:0.01,0.01)},anchor=north west,yshift=0em,xshift=0em},
grid=both,
major grid style={solid,draw=gray!50},
minor grid style={densely dotted,draw=gray!50},
label style={font=\footnotesize},
tick label style={font=\footnotesize},
]

\addplot[color=myParula02Orange,line width = 1pt,dashed] table[x=index,y=expectedInac] {figures/128_64_RM_inac_evo_eps_04.txt};
\addlegendentry{$(128,64)$ RM}
\addplot[color=myParula01Blue,line width = 1pt,dashed] table[x=index,y=expectedInac] {figures/128_64_polar_inac_evo_eps_04.txt};
\addlegendentry{$(128,64)$ polar}

\end{axis}
\end{tikzpicture}
	\vspace*{-3mm}
	\caption{$\E[G_i]$ vs. $i$ at $\epsilon=0.4$.}\label{fig:inac_evo_polar}
\end{figure}
\begin{figure}    
	\centering
	\begin{tikzpicture}
\begin{axis}[
mark options={solid,scale=0.75},
width=1*\linewidth,
height=0.6*\columnwidth,
title style={font=\footnotesize,align=center},
legend cell align=left,
legend style={font=\footnotesize},
legend columns=2,
legend style={at={(0.5,-0.1)},anchor=north,draw=none,/tikz/every even column/.append style={column sep=1mm},cells={align=left}},
ylabel near ticks,
xlabel near ticks,
xmin=1,
xmax=128,
ymin=0,
ymax=40,
xlabel={\textcolor{black}{Bit index $i$}},
ylabel={\textcolor{black}{\scriptsize{Unresol. Inactiv. $\E[G_i]$}}},
x label style={at={(axis description cs:0.02,0.85)},anchor=south west,yshift=0em,xshift=1em},
y label style={at={(axis description cs:0.01,0.25)},anchor=north west,yshift=0em,xshift=0em},
grid=both,
major grid style={solid,draw=gray!50},
minor grid style={densely dotted,draw=gray!50},
label style={font=\footnotesize},
tick label style={font=\footnotesize},
]

\addplot[color=myParula05Green,line width = 1pt,dashed] table[x=index,y=expectedInac] {figures/128_64_random_inac_evo_eps_04.txt};
\addlegendentry{$(128,64)$ random}

\addplot[color=myParula04Purple,line width = 1pt,dashed] table[x=index,y=expectedInac] {figures/128_64_ebch_inac_evo_eps_04.txt};
\addlegendentry{$(128,64)$ eBCH}
\addplot[color=myParula07Red,line width = 1pt,dashed] table[x=index,y=expectedInac] {figures/128_64_RMsubcode_inac_evo_eps_04.txt};
\addlegendentry{$(128,64)$ d-RM}
\addplot[color=cyan,line width = 1pt,dotted] table[x=index,y=expectedInac] {figures/128_64_RMproposedLimited_inac_evo_eps_04.txt};
\addlegendentry{$(128,64)$ 7d-RM}
\addplot[color=myParula03Yellow,line width = 1pt,dashed] table[x=index,y=expectedInac] {figures/128_64_ebchpeihong_inac_evo_eps_04.txt};
\addlegendentry{$(128,64)$ eBCH-pol.}
\draw (75,-1) rectangle (101,1);
\draw (75,0) -- (44.05,11.9);
\draw (101,1) -- (86.3,31.8);
\coordinate (insetPosition) at (rel axis cs:0.7,0.2);
\end{axis}

\begin{axis}[at={(insetPosition)},anchor={outer south east},tiny,xmin=75,xmax=101,ymin=0,ymax=1]

\addplot[color=myParula07Red,line width = 1pt,dashed] table[x=index,y=expectedInac] {figures/128_64_RMsubcode_inac_evo_eps_04sub.txt};
\addplot[color=cyan,line width = 1pt,dotted] table[x=index,y=expectedInac] {figures/128_64_RMproposedLimited_inac_evo_eps_04sub.txt};
\addplot[color=myParula03Yellow,line width = 1pt,dashed] table[x=index,y=expectedInac] {figures/128_64_ebchpeihong_inac_evo_eps_04sub.txt};

\end{axis}

\end{tikzpicture}
	\vspace*{-3mm}
	\caption{$\E[G_i]$ vs. $i$ at $\epsilon=0.4$.}\label{fig:inac_evo}
\end{figure}
	\section{Conclusion}
An inactivation decoder is proposed using the schedule of the \ac{SC} decoding for transmission over the BEC. Using density evolution, the expected number of inactivations is derived analytically. The results are illustrated numerically for various codes. The expected number of unresolved inactivations quantifies the performance vs. complexity trade-off for the proposed SC inactivation or SCL decoder for \ac{MAP} decoding of a given code.
	
	\section*{Acknowledgement}
	This work was supported by the research grant "Efficient Coding and Modulation for Satellite Links with Severe Delay Constraints" funded by the Helmholtz Gemeinschaft through the HGF-Allianz DLR@Uni project Munich Aerospace.
	The authors thank Gianluigi Liva (DLR) for fruitful discussions that inspired this work, Peihong Yuan (TUM) for providing the dynamic frozen bit constraints for the \ac{eBCH} code and Gerhard Kramer (TUM) for comments improving the presentation.

	\appendix
\subsection{Proof of Proposition 1}
Note that the relation $P^{\mathrm{(SC)}}\leq P^{\mathrm{(GA)}}$ is shown already in \cite{arikan2009channel} while the equality is formalized later \cite{Mori:2009SIT,Sasoglu12}. Let $\mathcal{T} \triangleq \{(u_1^n, y_1^n)\in\mathcal{X}^n\times\mathcal{Y}^n:u_{\mathcal{A}^c} = 0^{n-k}\}$. For each sample point $(u_1^n, y_1^n)\in\mathcal{T}$, the \ac{SC} decoder outputs a vector $\hat{u}_1^n(u_1^n,y_1^n)$ whose elements are computed recursively as
\begin{equation}
\hat{u}_i(u_1^n,y_1^n) = \left\{\begin{array}{lll}
u_i &\text{if} \,\,\, i\in\mathcal{A}^c \\
f_i(y_1^n,\hat{u}_1^{i-1}(u_1^n,y_1^n))& \text{otherwise.}
\end{array}
\right.
\label{eq:decision_proof}
\end{equation}

The event of having the first bit-error at the $i$-th bit (information bit) under \ac{SC} decoding is defined as $\mathcal{B}_{i}^{\mathrm{(SC)}}\triangleq \{(u_1^n, y_1^n)\in\mathcal{T}:\hat{u}_1^{i-1}(u_1^n,y_1^n) = u_1^{i-1}, f_i(y_1^n,\hat{u}_1^{i-1}(u_1^n,y_1^n))\neq u_i\}$. However,
the bit-error event for the same bit $u_i$ under a genie-aided \ac{SC} decoder is $\mathcal{B}_{i}^{\mathrm{(GA)}} \triangleq \{(u_1^n, y_1^n)\in\mathcal{T}: f_i(y_1^n,u_1^{i-1}) \neq u_i\}$. Observe that $\mathcal{B}_{i}^{\mathrm{(SC)}}\subseteq\mathcal{B}_{i}^{\mathrm{(GA)}}$\cite{arikan2009channel}.
\begin{lemma}\label{lem:union}
	$\bigcup\limits_{i=1}^{\ell} \mathcal{B}_{i}^{\mathrm{(SC)}} = \bigcup\limits_{i=1}^{\ell} \mathcal{B}_{i}^{\mathrm{(GA)}}$ for all $\ell\in\{1,\dots,n\}$.
\end{lemma}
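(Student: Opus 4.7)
The inclusion $\bigcup_{i=1}^{\ell}\mathcal{B}_i^{(\mathrm{SC})}\subseteq\bigcup_{i=1}^{\ell}\mathcal{B}_i^{(\mathrm{GA})}$ is immediate from the already-cited fact that $\mathcal{B}_i^{(\mathrm{SC})}\subseteq\mathcal{B}_i^{(\mathrm{GA})}$ for every $i$. So the whole content of the lemma is the reverse inclusion, and my plan is to prove it by induction on $\ell$.

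The base case $\ell=1$ is trivial because $\hat{u}_1^{0}$ is the empty prefix, making $\mathcal{B}_1^{(\mathrm{SC})}=\mathcal{B}_1^{(\mathrm{GA})}$ by their very definitions (or, if the first index is frozen, both sets are empty and the first genuine event occurs at the earliest information index, which can be treated as the true base of the induction with no extra work). For the inductive step, assume the equality of unions up to $\ell-1$ and take a sample point $(u_1^n,y_1^n)\in\bigcup_{i=1}^{\ell}\mathcal{B}_i^{(\mathrm{GA})}$. I would split into two cases: either this point already lies in $\bigcup_{i=1}^{\ell-1}\mathcal{B}_i^{(\mathrm{GA})}$, in which case the inductive hypothesis places it in $\bigcup_{i=1}^{\ell-1}\mathcal{B}_i^{(\mathrm{SC})}\subseteq\bigcup_{i=1}^{\ell}\mathcal{B}_i^{(\mathrm{SC})}$ and we are done; or it lies in $\mathcal{B}_\ell^{(\mathrm{GA})}\setminus\bigcup_{i=1}^{\ell-1}\mathcal{B}_i^{(\mathrm{GA})}$, and I need to show it belongs to $\mathcal{B}_\ell^{(\mathrm{SC})}$.

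The crux is therefore the sub-claim: if $(u_1^n,y_1^n)$ avoids all of $\mathcal{B}_1^{(\mathrm{GA})},\dots,\mathcal{B}_{\ell-1}^{(\mathrm{GA})}$, then $\hat{u}_1^{\ell-1}(u_1^n,y_1^n)=u_1^{\ell-1}$. I will establish this by a short secondary induction on $j=1,\dots,\ell-1$ using the recursion \eqref{eq:decision_proof}. For a frozen index $j\in\mathcal{A}^c$ we have $\hat{u}_j=u_j$ by definition; for an information index $j\in\mathcal{A}$, the inductive hypothesis of the secondary induction gives $\hat{u}_1^{j-1}=u_1^{j-1}$, so $f_j(y_1^n,\hat{u}_1^{j-1})=f_j(y_1^n,u_1^{j-1})$, which equals $u_j$ because the sample point is not in $\mathcal{B}_j^{(\mathrm{GA})}$. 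Once $\hat{u}_1^{\ell-1}=u_1^{\ell-1}$ is established, the condition defining $\mathcal{B}_\ell^{(\mathrm{GA})}$ translates verbatim into the condition defining $\mathcal{B}_\ell^{(\mathrm{SC})}$, completing the inductive step.

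The only delicate point, and the place where I would be careful writing it up, is keeping straight the indexing convention for frozen versus information positions in the definitions of $\mathcal{B}_i^{(\mathrm{SC})}$ and $\mathcal{B}_i^{(\mathrm{GA})}$: the parenthetical ``(information bit)'' in the paper suggests restricting $i$ to $\mathcal{A}$, and both halves of my argument (the secondary induction and the final translation) need to treat frozen coordinates as automatic ``correct'' steps rather than as bit-error events. Other than this bookkeeping, the argument is a clean double induction with no analytic content beyond the definitions.
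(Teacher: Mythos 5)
Your proof is correct and follows essentially the same route as the paper: induction on $\ell$, reducing the step to showing $\mathcal{B}_{\ell}^{\mathrm{(GA)}}\setminus\bigcup_{i<\ell}\mathcal{B}_{i}^{\mathrm{(GA)}}\subseteq\mathcal{B}_{\ell}^{\mathrm{(SC)}}$ via the observation that avoiding all earlier genie-aided errors forces $\hat{u}_1^{\ell-1}=u_1^{\ell-1}$. The only difference is cosmetic: you make explicit (as a secondary induction on the recursion \eqref{eq:decision_proof}) the step that the paper compresses into ``combining the definitions,'' and you phrase the argument element-wise rather than through the paper's set-algebraic identity for $\bigcup_{i=1}^{\ell'}\mathcal{B}_{i}^{\mathrm{(SC)}}$.
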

\begin{proof}
	We use induction. It is trivial for $\ell=1$. Assume that it holds for some $\ell = \ell'$.
	Note that the sets $\mathcal{B}_{i}^{\mathrm{(SC)}}$ are disjoint for $i=1,\dots,n$. Therefore, the proof is concluded if it is shown that $\mathcal{B}_{\ell'+1}^{\mathrm{(SC)}} = \mathcal{B}_{\ell'+1}^{\mathrm{(GA)}}\setminus\bigcup\limits_{i=1}^{\ell'}\mathcal{B}_{i}^{\mathrm{(GA)}}$. To this end, we write
	\vspace{-4mm}
	\begin{align}
		\mathcal{B}_{\ell'+1}^{\mathrm{(GA)}}\setminus\bigcup\limits_{i=1}^{\ell'}&\mathcal{B}_{\ell'+1}^{\mathrm{(GA)}} = \mathcal{B}_{\ell'+1}^{\mathrm{(GA)}}\setminus\bigcup\limits_{i=1}^{\ell'} \mathcal{B}_{i}^{\mathrm{(SC)}} \label{eq:assumption}\\ 
		& = \mathcal{B}_{\ell'+1}^{\mathrm{(GA)}}\setminus \{(u_1^n, y_1^n)\in\mathcal{T}: (f_1(y_1^n,u_1^0), \\
		&\qquad f_2(y_1^n,u_1^1),\dots,f_{\ell'}(y_1^n,u_1^{\ell'-1}))\neq u_1^{\ell'}\} \label{eq:step2}\\
		& = \mathcal{B}_{\ell'+1}^{\mathrm{(SC)}} \label{eq:final}
	\end{align}
	where \eqref{eq:assumption} follows from the induction hypothesis, \eqref{eq:step2} from the unions starting from $i=1$ to $i=\ell'$, e.g., the first union is
	\begin{equation}
		 \mathcal{B}_{1}^{\mathrm{(SC)}}\cup \mathcal{B}_{2}^{\mathrm{(SC)}} = \{(u_1^n, y_1^n)\in\mathcal{T}: (f_{1}(y_1^n,u_1^0),f_{2}(y_1^n,u_1^1))\neq u_1^{2}\}.
	\end{equation}
	Finally, we have \eqref{eq:final} by combining the definitions of $\mathcal{B}_{\ell'+1}^{\mathrm{(SC)}}$ and $\mathcal{B}_{\ell'+1}^{\mathrm{(GA)}}$.
\end{proof}
The result follows from Lemma \ref{lem:union} by setting $\ell = n$.\qed
	\IEEEtriggeratref{17}

\end{document}